\documentclass[aps,prd,twocolumn,superscriptaddress,reprint]{revtex4-2}

\usepackage[utf8]{inputenc}
\usepackage[T1]{fontenc}
\usepackage{amsmath,amssymb,amsfonts,amsthm,bm}
\usepackage{physics}
\usepackage{mathtools}
\usepackage{xcolor}
\usepackage[colorlinks=true,linkcolor=blue,citecolor=blue,urlcolor=blue]{hyperref}

\newtheorem{theorem}{Theorem}
\newtheorem{proposition}[theorem]{Proposition}
\newtheorem{corollary}[theorem]{Corollary}

\theoremstyle{remark}

\begin{document}

\title{Modular Lower Bounds on Reeh--Schlieder State Preparation}

\author{Javier Blanco-Romero}
\email{frblanco@pa.uc3m.es}
\affiliation{Department of Telematic Engineering, Universidad Carlos III de Madrid, Legan\'es, Madrid, 28911, Spain}
\author{Florina Almenares Mendoza}
\affiliation{Department of Telematic Engineering, Universidad Carlos III de Madrid, Legan\'es, Madrid, 28911, Spain}

\date{\today}

\begin{abstract}
The Reeh--Schlieder theorem says that every target vector can be approximated from the vacuum by an operator localized in an arbitrarily small spacetime region, but it gives no quantitative cost for doing so. This note isolates a standard Tomita--Takesaki estimate as a model-independent preparation bound. Targets with deeply negative modular energy require large local operators.

After rescaling such an operator to a physical contraction, the same estimate becomes a lower bound on postselection overhead. In geometries where the modular Hamiltonian is known, the bound becomes explicit. Bisognano--Wichmann turns it into a boost energy statement for wedges, and the Casini--Huerta--Myers formula gives a stress-tensor version for bounded regions of conformal field theories. Local unitaries can only reach states of nonnegative modular energy. Negative modular sectors require nonunitary or postselected outcomes, giving a preparation cost bound that complements vacuum embezzlement in type III local algebras.
\end{abstract}

\maketitle

\section{Introduction}

Let $\mathcal{O}$ be a bounded open region of Minkowski space. In the Haag--Kastler framework a quantum field theory assigns to it a von Neumann algebra $\mathcal{A}(\mathcal{O})$ acting on the vacuum Hilbert space~\cite{haag1964algebraic,haag2012local}. The Reeh--Schlieder theorem states that~\cite{reeh1961bemerkungen}
\begin{equation}
\overline{\mathcal{A}(\mathcal{O})\Omega}=\mathcal{H}.
\label{eq:reeh_schlieder}
\end{equation}
Every vector in the Hilbert space can be approximated by acting on the vacuum inside $\mathcal{O}$ alone.

This is not a statement about efficient preparation. The theorem gives no bound on $\|A\|$, no bound on the energy of $A\Omega$, and no constructive way to find $A$. The gap matters because local algebras in QFT are type III factors, which support state-conversion phenomena absent from finite tensor-product systems. Relativistic quantum fields can act as universal entanglement embezzlers, drawing entanglement from the vacuum while disturbing a reference state arbitrarily little~\cite{van2024relativistic,van2024embezzlement}.

The natural next question is not whether local state preparation is possible. Reeh--Schlieder already answers that. The question is how expensive the local operator must be.

We ask one basic question. If $A\in\mathcal{A}(\mathcal{O})$ prepares $\xi=A\Omega$, how large must $\|A\|$ be? Modular theory supplies a canonical lower bound, controlled by the signed expectation of the vacuum modular Hamiltonian of $\mathcal{O}$ in the target state. In geometries where this Hamiltonian is explicit, the bound becomes a quantitative statement about boost energy in wedges and about a weighted stress-tensor integral in conformal causal diamonds.

\section{The modular bound}

Reeh--Schlieder establishes that $\Omega$ is cyclic and separating for $\mathcal{A}(\mathcal{O})$. Cyclicity is the statement that $\mathcal{A}(\mathcal{O})\Omega$ is dense in $\mathcal{H}$, as expressed by Eq.~\eqref{eq:reeh_schlieder}. The separating property says that $A\mapsto A\Omega$ is injective, and follows because $\Omega$ is also cyclic for the commutant $\mathcal{A}(\mathcal{O})'\supset\mathcal{A}(\mathcal{O}')$. The Tomita operator is the antilinear map
\begin{equation}
S_{\mathcal{O}}(A\Omega)=A^*\Omega,\qquad A\in\mathcal{A}(\mathcal{O}),
\end{equation}
which is well defined precisely because $\Omega$ is separating.
This map is in general unbounded but closable. Let $\overline S_{\mathcal{O}}$ denote its closure. Its polar decomposition is
\begin{equation}
\overline S_{\mathcal{O}}=J_{\mathcal{O}}\Delta_{\mathcal{O}}^{1/2},
\end{equation}
where $J_{\mathcal{O}}$ is the modular conjugation, an antiunitary involution, and $\Delta_{\mathcal{O}}$ is the positive modular operator~\cite{takesaki2003theory,witten2018aps}. We write
\begin{equation}
K_{\mathcal{O}}=-\log\Delta_{\mathcal{O}}.
\end{equation}

The Tomita operator $S_{\mathcal{O}}$ compares a local excitation of the vacuum with the excitation produced by the adjoint operator. The polar decomposition is more than a normal form. Positivity of $\Delta_{\mathcal{O}}$ makes its imaginary powers $\Delta_{\mathcal{O}}^{it}$ unitary, and the Tomita--Takesaki theorem shows that conjugation by them preserves the same local algebra. Thus the state-algebra pair $(\mathcal{A}(\mathcal{O}),\Omega)$ canonically determines a one-parameter automorphism group
\begin{equation}
\sigma_t(A)=\Delta_{\mathcal{O}}^{it}A\Delta_{\mathcal{O}}^{-it}.
\end{equation}
This is the modular flow, and $t$ is modular time. Writing $\Delta_{\mathcal{O}}=e^{-K_{\mathcal{O}}}$ gives $\Delta_{\mathcal{O}}^{it}=e^{-itK_{\mathcal{O}}}$, so $K_{\mathcal{O}}$ generates this intrinsic time evolution exactly as an ordinary Hamiltonian generates ordinary time evolution. What singles out this generator is that $\Omega$ is an equilibrium state for it. The correlator $\langle\Omega|A\,\sigma_t(B)|\Omega\rangle$ has the same analytic continuation and imaginary-time periodicity as a Gibbs state, which is the Kubo--Martin--Schwinger (KMS) condition~\cite{haag1967equilibrium}. In this precise sense the vacuum looks thermal when probed by the algebra $\mathcal{A}(\mathcal{O})$, with $K_{\mathcal{O}}$ playing the role of the Hamiltonian for the corresponding modular clock. In the geometric instances of Sec.~\ref{sec:geometric} this identification becomes literal, with $K_{\mathcal{O}}$ a boost generator for wedges and a weighted energy density for CFT balls. The antiunitary factor $J_{\mathcal{O}}$ in the polar decomposition maps $\mathcal{A}(\mathcal{O})$ onto its commutant, relating the region to its causal complement.

\begin{proposition}[Modular lower bound]
\label{prop:modular_bound}
Let $A\in\mathcal{A}(\mathcal{O})$ and set $\xi=A\Omega$. Then
\begin{equation}
\|A\|\ge \|\Delta_{\mathcal{O}}^{1/2}\xi\|.
\label{eq:modular_norm_bound}
\end{equation}
If $\xi\neq 0$, $\hat\xi=\xi/\|\xi\|$, and $\langle K_{\mathcal{O}}\rangle_{\hat\xi}$ is finite, then
\begin{equation}
\|A\|\ge \|\xi\|\exp\!\left[-\frac12\langle K_{\mathcal{O}}\rangle_{\hat\xi}\right].
\label{eq:jensen_bound}
\end{equation}
\end{proposition}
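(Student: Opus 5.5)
The plan is to use the Tomita relation directly and then convert the norm bound into an exponential moment via the spectral theorem and Jensen's inequality.

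\textbf{Step 1: the norm bound \eqref{eq:modular_norm_bound}.} Since $A\Omega$ lies in the domain of $S_{\mathcal{O}}$, it lies in the domain of $\Delta_{\mathcal{O}}^{1/2}$, and
\[
J_{\mathcal{O}}\Delta_{\mathcal{O}}^{1/2}A\Omega=\overline S_{\mathcal{O}}A\Omega=A^*\Omega .
\]
Because $J_{\mathcal{O}}$ is antiunitary, it preserves norms. Hence
\[
\|\Delta_{\mathcal{O}}^{1/2}\xi\|=\|A^*\Omega\|\le\|A^*\|\,\|\Omega\|=\|A\|,
\]
using $\|\Omega\|=1$. This step is routine; the only point to check is the domain statement.

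\textbf{Step 2: spectral representation.} Let $\mu$ be the spectral measure of the self-adjoint operator $K_{\mathcal{O}}$ in the normalized vector $\hat\xi$. This is a probability measure on $\mathbb{R}$. By functional calculus,
\[
\|\Delta_{\mathcal{O}}^{1/2}\xi\|^2=\|\xi\|^2\langle\hat\xi,e^{-K_{\mathcal{O}}}\hat\xi\rangle=\|\xi\|^2\int e^{-k}\,d\mu(k).
\]
This integral is finite by Step 1.

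\textbf{Step 3: Jensen's inequality.} The hypothesis that $\langle K_{\mathcal{O}}\rangle_{\hat\xi}$ is finite should be read as $\int|k|\,d\mu<\infty$, with $\langle K_{\mathcal{O}}\rangle_{\hat\xi}=\int k\,d\mu$. Since $k\mapsto e^{-k}$ is convex, Jensen's inequality gives
\[
\int e^{-k}\,d\mu\ge\exp\!\left(-\int k\,d\mu\right)=e^{-\langle K\rangle}.
\]
Combining this with Steps 1 and 2 yields $\|A\|^2\ge\|\xi\|^2 e^{-\langle K\rangle}$. Taking square roots gives \eqref{eq:jensen_bound}.

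\textbf{Main obstacle.} The main difficulty is interpretive rather than computational. $K_{\mathcal{O}}$ is unbounded above and below, so $\langle K_{\mathcal{O}}\rangle_{\hat\xi}$ must be defined through the spectral measure, i.e.\ as a quadratic-form expectation, and not as $\langle\hat\xi,K_{\mathcal{O}}\hat\xi\rangle$ with $\hat\xi$ in the operator domain. I will state this convention explicitly. Under it, Jensen applies to an integrable $\mu$ without any further domain assumptions.
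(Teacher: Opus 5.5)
Your proposal is correct and follows the paper's proof step for step. The Tomita relation $\overline S_{\mathcal{O}}A\Omega=A^*\Omega$, combined with the fact that the antiunitary $J_{\mathcal{O}}$ preserves norms, gives \eqref{eq:modular_norm_bound}; the spectral measure of $K_{\mathcal{O}}$ in $\hat\xi$, together with Jensen's inequality for the convex function $e^{-x}$, then gives \eqref{eq:jensen_bound}. You justify finiteness of $\int e^{-k}\,d\mu$ and the identity $\|\Delta_{\mathcal{O}}^{1/2}\hat\xi\|^2=\int e^{-k}\,d\mu$ from $\xi\in D(\Delta_{\mathcal{O}}^{1/2})$. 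That is slightly more careful than the paper, which invokes the functional calculus for bounded Borel $f$ and then applies it to the unbounded $e^{-x}$.
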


\begin{proof}
For vectors of the form $A\Omega$, the closed extension of the Tomita operator satisfies
\begin{equation}
\overline S_{\mathcal{O}}A\Omega=A^*\Omega.
\end{equation}
Using the polar decomposition of $\overline S_{\mathcal{O}}$, we get
\begin{equation}
\|A^*\Omega\|=\|J_{\mathcal{O}}\Delta_{\mathcal{O}}^{1/2}A\Omega\|=\|\Delta_{\mathcal{O}}^{1/2}\xi\|.
\end{equation}
Since $\|A^*\Omega\|\le \|A^*\|=\|A\|$, Eq.~\eqref{eq:modular_norm_bound} follows for every bounded local $A$.

For the second claim, assume the stated expectation of $K_{\mathcal{O}}$ is finite. The functional calculus for the self-adjoint operator $K_{\mathcal{O}}$ associates to the state $\hat\xi$ a probability measure $\mu_{\hat\xi}$ on its spectrum such that $\langle f(K_{\mathcal{O}})\rangle_{\hat\xi}=\int f(x)\,d\mu_{\hat\xi}(x)$ for any bounded Borel $f$. Applying this with $f(x)=e^{-x}$ and using $\Delta_{\mathcal{O}}=e^{-K_{\mathcal{O}}}$,
\begin{equation}
\|\Delta_{\mathcal{O}}^{1/2}\hat\xi\|^2
=\langle e^{-K_{\mathcal{O}}}\rangle_{\hat\xi}
=\int e^{-x}\,d\mu_{\hat\xi}(x).
\end{equation}
Since $x\mapsto e^{-x}$ is convex, Jensen's inequality gives
\begin{equation}
\int e^{-x}\,d\mu_{\hat\xi}(x)
\ge \exp\!\left[-\int x\,d\mu_{\hat\xi}(x)\right]
=\exp[-\langle K_{\mathcal{O}}\rangle_{\hat\xi}],
\end{equation}
which proves Eq.~\eqref{eq:jensen_bound}.
\end{proof}

The proposition says that the Hilbert-space norm of the target is not the only scale relevant to local preparation. The same operator that creates $A\Omega$ also has an adjoint action $A^*\Omega$, and Tomita theory identifies its size with the modular norm $\|\Delta_{\mathcal{O}}^{1/2}\xi\|$. Reeh--Schlieder allows approximation of any target vector, but vectors with large negative modular energy require large local operators.

For a normalized target, the elementary bound $\|A\|\ge \|A\Omega\|$ already gives $\|A\|\ge 1$. If $\langle K_{\mathcal{O}}\rangle_{\hat\xi}$ is positive, Eq.~\eqref{eq:jensen_bound} is weaker than this trivial estimate. If $\langle K_{\mathcal{O}}\rangle_{\hat\xi}\le -M$, it gives
\begin{equation}
\|A\|\ge \|\xi\|e^{M/2}.
\end{equation}
Thus the exponential cost comes from the negative modular sector. The sharper norm bound \eqref{eq:modular_norm_bound} may see more of the spectrum, but the expectation value bound has its physical bite when the signed modular energy is negative.

The estimate is an elementary consequence of $S(A\Omega)=A^*\Omega$ and Jensen's inequality. It turns standard modular structure into a direct lower bound for local preparation and postselection in the type III setting.

The quantity $\langle K_{\mathcal{O}}\rangle_{\hat\xi}$ is a signed expectation value of the vacuum modular Hamiltonian. It is not the Araki relative entropy $S(\omega_{\hat\xi}\,\|\,\omega_\Omega)$~\cite{araki1977relative}, which is a different, nonnegative object built from the relative modular operator. Negative modular energy therefore does not mean negative total Minkowski energy. It means negative energy with respect to the modular clock attached to the chosen algebra and the vacuum.

\section{Postselection overhead}

The operator norm has a direct operational meaning. Recall that a contraction is a bounded operator $B$ with $\|B\|\le 1$, equivalently $\|B\psi\|\le\|\psi\|$ for every $\psi$; it need not be unitary or even isometric. Any nonzero bounded local operator $A$ can be rescaled to the norm-one contraction
\begin{equation}
B=\frac{A}{\|A\|}.
\end{equation}
More generally, any local contraction $B$ satisfies $B^*B\le I$. Hence it can serve as the Kraus operator for the success outcome of a local two-outcome measurement. The other outcome has Kraus operator $(I-B^*B)^{1/2}$, which also lies in $\mathcal{A}(\mathcal{O})$. Acting on the vacuum, the probability of the success outcome is
\begin{equation}
p_{\rm succ}(B)=\|B\Omega\|^2.
\end{equation}

\begin{corollary}[Modular overhead]
\label{cor:overhead}
Let $B\in\mathcal{A}(\mathcal{O})$ be a contraction with $B\Omega\ne 0$, and let $\hat\xi=B\Omega/\|B\Omega\|$. Then
\begin{equation}
p_{\rm succ}(B)
\le \|\Delta_{\mathcal{O}}^{1/2}\hat\xi\|^{-2}.
\label{eq:success_bound}
\end{equation}
If $\langle K_{\mathcal{O}}\rangle_{\hat\xi}$ is finite, the expected number of trials obeys
\begin{equation}
p_{\rm succ}(B)^{-1}
\ge \exp[-\langle K_{\mathcal{O}}\rangle_{\hat\xi}].
\label{eq:overhead_bound}
\end{equation}
\end{corollary}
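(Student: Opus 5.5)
The plan is to apply Proposition~\ref{prop:modular_bound} directly to the contraction $B$ itself, taking $A=B$ and $\xi=B\Omega$. Since $B\Omega\neq 0$, we have $\xi\neq 0$. Equation~\eqref{eq:modular_norm_bound} then gives
\begin{equation}
1\ge\|B\|\ge\|\Delta_{\mathcal{O}}^{1/2}B\Omega\|.
\end{equation}
The modular operator is linear, and $B\Omega=\|B\Omega\|\,\hat\xi$. The right-hand side therefore factors as $\|B\Omega\|\,\|\Delta_{\mathcal{O}}^{1/2}\hat\xi\|$. Here $\hat\xi$ lies in the domain of $\Delta_{\mathcal{O}}^{1/2}$ because $B\Omega$ does, since $\mathcal{A}(\mathcal{O})\Omega\subset D(\overline S_{\mathcal{O}})=D(\Delta_{\mathcal{O}}^{1/2})$.

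The next step is to square the inequality and use $p_{\rm succ}(B)=\|B\Omega\|^2$. This gives
\begin{equation}
p_{\rm succ}(B)\,\|\Delta_{\mathcal{O}}^{1/2}\hat\xi\|^2\le 1.
\end{equation}
To obtain Eq.~\eqref{eq:success_bound} we must divide by $\|\Delta_{\mathcal{O}}^{1/2}\hat\xi\|^2$, so that quantity must be nonzero. This is the one point needing care. Since $\Delta_{\mathcal{O}}$ is a positive, injective operator (its logarithm $K_{\mathcal{O}}$ is self-adjoint), $\Delta_{\mathcal{O}}^{1/2}\hat\xi\neq 0$ for $\hat\xi\neq 0$. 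Alternatively, one can argue via $\|\Delta_{\mathcal{O}}^{1/2}\hat\xi\|=\|B^*\Omega\|/\|B\Omega\|$. Here $B^*\Omega\neq0$ because $\Omega$ is separating and $B\neq 0$. I would use the separating-vector argument, since it relies only on facts stated earlier in the paper.

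For Eq.~\eqref{eq:overhead_bound}, assume $\langle K_{\mathcal{O}}\rangle_{\hat\xi}$ is finite. The Jensen step from the proof of Proposition~\ref{prop:modular_bound} gives
\begin{equation}
\|\Delta_{\mathcal{O}}^{1/2}\hat\xi\|^2\ge\exp[-\langle K_{\mathcal{O}}\rangle_{\hat\xi}].
\end{equation}
Equivalently, Eq.~\eqref{eq:jensen_bound} with $A=B$ reads $1\ge\|B\Omega\|\exp[-\tfrac12\langle K_{\mathcal{O}}\rangle_{\hat\xi}]$. Squaring and inverting then yields
\begin{equation}
p_{\rm succ}(B)^{-1}\ge\exp[-\langle K_{\mathcal{O}}\rangle_{\hat\xi}].
\end{equation}
Finally, the expected number of independent trials until the success outcome occurs is $p_{\rm succ}^{-1}$, the mean of a geometric distribution, which gives the stated interpretation. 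No step here is a real obstacle: everything reduces to $\|B\|\le 1$ together with the proposition.
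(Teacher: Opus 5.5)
Your proposal is correct and follows the paper's own proof: apply Eq.~\eqref{eq:modular_norm_bound} to $B$ to get $1\ge\|B\Omega\|\,\|\Delta_{\mathcal{O}}^{1/2}\hat\xi\|$, square to obtain the success bound, and use the Jensen step for the overhead bound. Your check that $\|\Delta_{\mathcal{O}}^{1/2}\hat\xi\|\neq 0$ (via the separating property applied to $B^*$), which is needed to divide, makes explicit a step the paper leaves implicit.
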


\begin{proof}
Apply Eq.~\eqref{eq:modular_norm_bound} to $B$:
\begin{equation}
1\ge \|B\|\ge \|B\Omega\|\,\|\Delta_{\mathcal{O}}^{1/2}\hat\xi\|.
\end{equation}
Squaring gives the success bound. Jensen's inequality gives the overhead bound.
\end{proof}

Thus a modular lower bound on $\|A\|$ is also a lower bound on postselection cost. This connects the AQFT statement to the usual language of quantum information, where postselection can make transformations possible at exponential cost~\cite{aaronson2005quantum}. Here we use that connection only as a resource statement. If a computational encoding forces the successful outcome into modular energies of order $-M(n)$, then any local postselected preparation in this model needs at least $e^{M(n)}$ trials. Turning this into a complexity theorem requires a specified encoding and readout model, but the physical obstruction is already visible at the level of local state preparation.

\section{Geometric realizations}\label{sec:geometric}

The bound becomes useful when $K_{\mathcal{O}}$ is explicit. The modular Hamiltonian is the clock naturally attached to the chosen algebra. It need not be the ordinary Hamiltonian that translates all of Minkowski time. For special regions, however, the clock has a geometric meaning.

\subsection{Wedges}

Let
\begin{equation}
W_R=\{x\in\mathbb{R}^{d+1}:x^1>|x^0|\}
\end{equation}
be the right Rindler wedge. The Bisognano--Wichmann theorem~\cite{bisognano1976duality} identifies the modular operator of $(\mathcal{A}(W_R),\Omega)$ with the Poincar\'e boost generator that preserves the wedge. Concretely, $K_{\rm boost}$ is the conserved charge associated with $x^1$-boosts,
\begin{equation}
K_{\rm boost}=\int_{x^0=0}d^d x\;x^1\,T_{00}(\bm{x}),
\end{equation}
and the convention
\begin{equation}
\Delta_{W_R}=e^{-2\pi K_{\rm boost}}
\end{equation}
makes modular flow the one-parameter subgroup of Lorentz boosts in the $x^0$-$x^1$ plane. Proposition~\ref{prop:modular_bound} gives
\begin{equation}
\|A\|\ge \|\xi\|\exp[-\pi\langle K_{\rm boost}\rangle_{\hat\xi}].
\label{eq:wedge_bound}
\end{equation}

Here modular time is Rindler time. It is the time measured by uniformly accelerated observers who remain inside the wedge. The corresponding generator is a boost, not the global Hamiltonian. It is nevertheless an energy in the operational sense relevant to those observers, since it is the conserved quantity that translates their time.

The sign matters. States with positive boost energy produce no large lower bound from Eq.~\eqref{eq:wedge_bound}. States with large negative boost energy do. A right wedge operation cannot signal across the horizon, but Reeh--Schlieder says that its action on the entangled vacuum is dense in the Hilbert space. Outcomes that run strongly against Rindler time require exponentially small postselection probability or exponentially large operator norm. The obstruction is therefore not ordinary energy by itself. It is energy measured by the modular clock of the region.

\subsection{CFT balls}

Wedges are unbounded. A bounded region is more relevant for local preparation. In a $d+1$-dimensional conformal field theory, the causal diamond $\mathcal{D}_R$ of a ball of radius $R$ has a local modular Hamiltonian. Casini, Huerta, and Myers showed that $K_{\mathcal{D}_R}=-\log\Delta_{\mathcal{D}_R}$ equals the stress-tensor integral on the full Hilbert space~\cite{casini2011towards}:
\begin{equation}
K_{\mathcal{D}_R}
=2\pi\int_{|\bm{x}|<R} d^{d}x\,
 w_R(\bm{x})\,T_{00}(\bm{x}),
\label{eq:chm}
\end{equation}
where
\begin{equation}
w_R(\bm{x})=\frac{R^2-|\bm{x}|^2}{2R}
\end{equation}
and Proposition~\ref{prop:modular_bound} becomes the explicit preparation bound
\begin{align}
\|A\|&\ge \|\xi\|\exp\!\left[-\frac12\langle K_{\mathcal{D}_R}\rangle_{\hat\xi}\right] \\
&= \|\xi\|\exp\!\left[-\pi\int_{|\bm{x}|<R} d^{d}x\,
w_R(\bm{x})\,\langle T_{00}(\bm{x})\rangle_{\hat\xi}\right].
\label{eq:cft_general}
\end{align}
In particular, any family of target states with $\langle K_{\mathcal{D}_R}\rangle_{\hat\xi}\le -M$ requires
\begin{equation}
\|A\|\ge \|\xi\|e^{M/2},
\qquad
p_{\rm succ}\le e^{-M}.
\label{eq:cft_exponential}
\end{equation}

This is the ball version of the wedge obstruction. The modular clock is now the one-parameter conformal isometry of Minkowski space that maps $\mathcal{D}_R$ to itself, a conformal Killing flow whose orbits foliate the diamond and reduce to the time translations of an inertial observer at its center. On the central time slice its generator is the spatially weighted energy in Eq.~\eqref{eq:chm}, with lapse $w_R(\bm{x})$ that vanishes at the boundary and peaks at the center. Negative energy near the center therefore contributes strongly to negative modular energy and drives a large exponential cost; negative energy very close to the boundary is suppressed in this bound, not because it is physically free, but because the conformal Killing flow becomes null at the entangling surface and the modular Hamiltonian has little leverage there.

This negative modular sector is not merely formal. Local energy density in QFT is not a positive operator. Even after smearing with a smooth test function, there are physical states whose renormalized energy density is negative in the sampled region~\cite{epstein1965nonpositivity,witten2018aps}. In free-field language these are usually discussed as squeezed-state negative-energy pulses. The pulse is not free energy from nowhere. It is accompanied by compensating positive energy elsewhere, so the global Hamiltonian can remain positive.

The ball modular Hamiltonian can nevertheless be negative because it samples energy with a position-dependent lapse. Suppose, schematically, that a target state has
\begin{equation}
\langle T_{00}(\bm{x})\rangle_{\hat\xi}<0
\quad\hbox{in a small region near }\bm{x}=0,
\end{equation}
and that the compensating positive energy lies closer to $|\bm{x}|=R$. Since $w_R(0)=R/2$ while $w_R(\bm{x})\to0$ at the boundary, the negative contribution can dominate the weighted integral even when the ordinary total energy is nonnegative. In a simple two-pulse model, with negative energy $-E_-$ localized near radius $r_-$ and positive energy $E_+$ localized near radius $r_+$, the sign of the modular energy is controlled by
\begin{equation}
\langle K_{\mathcal{D}_R}\rangle_{\hat\xi}
\simeq 2\pi\left[-w_R(r_-)E_-+w_R(r_+)E_+\right].
\end{equation}
Thus $r_-\ll R$ and $r_+\lesssim R$ can give $\langle K_{\mathcal{D}_R}\rangle_{\hat\xi}\le -M$ without requiring negative total energy. Eq.~\eqref{eq:cft_exponential} then says that any local Kraus operator preparing this state from the vacuum succeeds with probability at most $e^{-M}$. The exponential barrier is therefore attached to a known family of QFT states. The relevant energy is not the total Hamiltonian energy, but the energy distribution as seen by the diamond's modular clock.

Both cases share the same anatomy. The obstruction is signed energy seen through a local modular clock, namely Rindler boost time for the wedge and conformal time of the diamond (with lapse $w_R(\bm{x})$) for the ball. What singles out these two regions is that their modular Hamiltonians are local and explicit, so modular flow acts as an honest Hamiltonian evolution. For a generic region, the modular Hamiltonian is nonlocal and not known in closed form. In a type~I approximation one would write it in terms of reduced density matrices of the vacuum on the region and its complement, but this formal expression does not by itself provide a usable local generator; it still requires access to the spectral decomposition of the reduced state.

\section{Comparison with vacuum embezzlement}

Universal embezzlement~\cite{van2024relativistic,van2024embezzlement} shows that relativistic local algebras allow powerful state conversion in principle. Proposition~\ref{prop:modular_bound} addresses a complementary operational question. If the conversion is represented, on the original vacuum Hilbert space, by a single bounded local Kraus operator, how large must that operator be, or how rare is the corresponding measurement outcome?

The unitary case marks the deterministic boundary. If $U\in\mathcal{A}(\mathcal{O})$ is unitary, then $\|U\|=1$ and $\xi=U\Omega$ has unit norm, so Eq.~\eqref{eq:jensen_bound} gives
\begin{equation}
\langle K_{\mathcal{O}}\rangle_{U\Omega}\;\ge\;0.
\label{eq:unitary_floor}
\end{equation}
A local unitary may rearrange the vacuum, but it cannot produce a state of negative modular energy for the same local algebra.

This is the operational role of the bound. It locates where deterministic local dynamics stops. State preparation may still use measurement and conditioning, and a conditioned outcome is represented by a contraction rather than a unitary. If such an outcome targets a state with $\langle K_{\mathcal{O}}\rangle\le -M$, Corollary~\ref{cor:overhead} gives
\begin{equation}
p_{\rm succ}\le e^{-M}.
\end{equation}
The negative modular sector is therefore reachable only as a rare local measurement outcome, with the cost paid in postselection overhead.

\section{Conclusion}

Reeh--Schlieder reachability is exact in spirit and can be extravagant in cost. Modular theory gives a precise obstruction. If a local operator $A$ prepares $\xi=A\Omega$, then its norm is bounded below by the modular size of $\xi$. For negative modular energy the expectation-value bound becomes exponential, and after rescaling to a contraction it becomes a postselection overhead bound.

The estimate does not construct efficient Reeh--Schlieder approximants, make ordinary energy a universal complexity measure, or prove a complexity class separation. It identifies signed modular energy, relative to the chosen local algebra, as an obstruction to cheap local preparation. If future encodings of computational tasks force target states deep into negative modular sectors, the bound becomes an exponential resource barrier for those encodings. Universal embezzlement shows that relativistic local algebras allow strong state conversion in principle~\cite{van2024relativistic,van2024embezzlement}; once the conversion is represented by a bounded Kraus operator, modular geometry controls the success probability, and for local unitaries it forces the vacuum to nonnegative modular energy.

Type III vacuum entanglement makes strong state conversion possible. Modular geometry sets a lower bound on the local operator that performs it.

\begin{acknowledgments}
J.B.-R. is grateful to Juan Le\'on for illuminating conversations on the foundations of quantum mechanics and quantum field theory during his bachelor thesis, which shaped his interest in the subject. This work was supported by the Spanish Government under grant PID2023-148716OB-C33, funded by MICIU/AEI/10.13039/501100011033, as part of the DISCOVERY project (``DIstributed Smart Communications with Verifiable EneRgy-optimal Yields''). The Community of Madrid provided further support through the RAMONES-CM project (TEC-2024/COM-504), awarded by Order~5696/2024 of the General Directorate of Research and Technological Innovation.
\end{acknowledgments}

\bibliographystyle{apsrev4-2}
\bibliography{main}

\end{document}